\documentclass[journal]{IEEEtran}

\ifCLASSINFOpdf
\else
   \usepackage{graphicx}
\fi
\usepackage{url}

\usepackage{graphicx}
\usepackage{amsmath,amsthm,amssymb}
\usepackage{booktabs}
\usepackage{xcolor}
\usepackage{nicefrac}
\usepackage{cancel}
\usepackage{balance}
\usepackage{braket}
\usepackage{hyperref}
\usepackage{microtype}
\usepackage[full]{textcomp}
\newcommand{\sat}[1]{{\color{black}#1}}

\makeatletter
\theoremstyle{plain}
\@ifundefined{theorem}{\newtheorem{theorem}{Theorem}}{}
\@ifundefined{lemma}{\newtheorem{lemma}{Lemma}}{}
\@ifundefined{corollary}{\newtheorem{corollary}{Corollary}}{}
\@ifundefined{proposition}{\newtheorem{proposition}{Proposition}}{}
\theoremstyle{definition}
\@ifundefined{definition}{}{}
\theoremstyle{remark}
\@ifundefined{remark}{}{}
\makeatother

\title{Proving the Limits of Quantum Power Flow}
\author{
Cameron Khanpour$^\star$ and Samuel Talkington$^\sharp$
\thanks{\color{black}
$\star$: School of Electrical and Computer Engineering, Georgia Institute of Technology, Atlanta, GA, USA.  Email: \href{mailto:khanpour@gatech.edu}{khanpour@gatech.edu}.
}
\thanks{\color{black}
$\sharp$: Department of Electrical Engineering and Computer Science, University of Michigan, Ann Arbor, MI, USA. Email: \href{mailto:talks@umich.edu}{talks@umich.edu}.
}
}%
\IEEEaftertitletext{\vspace{-2\baselineskip}}
\begin{document}

\maketitle

\begin{abstract}
This letter proves realistic grid properties limit the applicability of quantum computers for power flow. Grids that split into two large regions meeting at only a few buses, common in transmission networks, force the pseudo condition number of the DC susceptance matrix to grow polynomially in the network size, and long chains of lines bridging such regions force quadratic growth. This rigorously verifies the empirical  results of recent work. We also show that the theory holds without model information with high probability for \sat{independent bounded random line susceptances}. Combined with query and tomography lower bounds, this precludes end-to-end quantum advantage for DC power flow at every readout level, and these obstructions persist through AC power flow, optimal power flow, and unit commitment. All proofs are formally verified in Lean 4.
\end{abstract}
\vspace{-1em}
\section{Introduction}
\label{sec:intro}

Quantum linear system (QLS) solvers~\cite{hhl2009,costa2022} prepare a state proportional to $A^{-1}\ket{b}$ in time polylogarithmic in the dimension of $A$, inspiring a sizable quantum power flow literature advertising exponential speedups, e.g.,~\cite{feng2021}. The generic caveats of QLS subroutines of input loading, output readout, and condition number dependence~\cite{dalzell2023} were recently examined for DC power flow (DCPF) in~\cite{pareek2024}. They found end-to-end costs favor classical iterative methods, and practical \sat{HHL} advantage requires readout level $D$ and condition number $\kappa$ with $D\kappa^{3/2}\ll n/d$ on an $n$-bus system of row sparsity $d$. However, this relies on an \emph{empirical} fit $\kappa \sim n^2$ over PGLib test systems~\cite{pglib2019}. This insightful result leaves open the question as to whether it applies to general classes of grid structures. 

To this end, this letter asks the following question:
\begin{quote}\centering
\emph{Is the ill-conditioning of power flow systems an inherent consequence of transmission network topology that quantum algorithms cannot avoid?}
\end{quote}
We prove this in the affirmative, and propagate the consequences through adjacent problems in power engineering. Furthermore, every theoretical result is machine-checked in Lean~4\footnote{\url{https://github.com/eigenergy/quantum-power-flow-limits}}; the first power systems contribution to do so. Section~\ref{sec:conditioning} shows that \emph{small balanced separators}\textemdash which are implied by $\tau$-bounded treewidth or near planarity, both of which are common characteristics of realistic transmission grids~\cite{madani2015,birchfield2017}\textemdash force the pseudo condition number of a grid with maximum degree~$\Delta$ to $\Omega(n/\Delta\tau)$, with transfer corridors forcing $\Omega(n^2)$, explaining the scaling in~\cite{pareek2024}. We then show that both bounds hold with high probability for any \sat{independent bounded random susceptances} given a fixed topology. Section~\ref{sec:e2e} adds query and tomography lower bounds to rule out end-to-end advantage at \textit{every} readout level \sat{$D\geq1$ in the sense of~\cite{pareek2024}}. Section~\ref{sec:beyond} discusses how the obstructions may impact AC power flow (ACPF), DC optimal power flow (OPF), AC-OPF, security constrained OPF, unit commitment (UC), and hybrid schemes.
Section~\ref{sec:outlook} concludes that nearly-linear time Laplacian solvers, randomized numerical linear algebra, and \textit{quantum-inspired} algorithms may help deliver the speedups sought from quantum hardware.

\section{Ill-Conditioning is Structural}
\label{sec:conditioning}

Let $G = (V, E)$ be a connected graph with $n=|V|$ buses, $m=|E|$ branches with series susceptances $b_e > 0$, and maximum degree~$\Delta$. The DCPF problem is $B\theta = p$, $p \perp \mathbf{1}$, with susceptance Laplacian $B = A^\top\!\operatorname{diag}(b_e)A \succeq 0$, $\ker B = \operatorname{span}\{\mathbf{1}\}$ for incidence matrix $A$. For $F \subseteq E$ write $b(F) \triangleq \sum_{e\in F} b_e$, let $\bar{b} \triangleq b(E)/m$, $b_{\max} \triangleq \max_e b_e$, and let $\partial S$ be the edges leaving $S \subset V$. The \emph{pseudo condition number} is $\kappa_+(B) \triangleq \lambda_{\max}(B)/\lambda_2(B)$, with $\lambda_2$ the smallest nonzero eigenvalue. All bounds transfer to the grounded matrix $B_r$ (slack bus deleted), which solvers invert. 
Poincar\'e separation gives $\lambda_{\min}(B_r) \leq \lambda_2(B)$, and \sat{$\operatorname{tr}(B_r)=2b(E)-b(\partial\{r\})$, where $b(\partial\{r\}) \leq \Delta \cdot b_{\max}$ and $b(\partial\{r\})=o(b(E))$ whenever $\Delta \cdot b_{\max}=o(b(E))$}.

Three computational facts anchor the analysis. \emph{(F1)}~\sat{Every rigorous fault tolerant QLS solver in the fixed matrix oracle model, with controlled input preparation, uses $\Omega(\kappa)$ queries on some input, matched by~\cite{costa2022}.} \emph{(F2)}~Recovering a classical $\varepsilon$-approximation of an $n$-dimensional pure state requires $\widetilde{\Theta}(n/\varepsilon)$ applications of its preparation unitary \sat{and inverse}~\cite{apeldoorn2023}, each a fresh QLS solve, and loading a dense classical vector costs $\Omega(n)$ gates per repetition absent QRAM~\cite{pareek2024}. \emph{(F3)}~Classically, $B$ is symmetric diagonally dominant, solvable in $\widetilde{O}(m\log(1/\varepsilon))$ time~\cite{kyng2016, kyng2026}, where $\widetilde{O}$ hides $\operatorname{polylog}(n)$ factors. Combinatorial preconditioning makes the $\kappa$-dependence logarithmic.

\begin{lemma}[Weighted cuts control conditioning]
\label{lem:cut}
Let $G$ be connected. Then (i) $\lambda_{\max}(B) \geq \max\{2b(E)/(n-1),\, 2\max_e b_e\}$, and (ii) for every nonempty proper $S \subset V$ with $\gamma \triangleq |S|/n$, $\lambda_2(B) \leq b(\partial S)/(\gamma(1-\gamma)n)$. Consequently,
\begin{equation}
\label{eq:cutbound}
    \kappa_+(B)
    \geq
    \frac{2\gamma(1-\gamma)}{b(\partial S)}\,
    \max\Bigl\{\, b(E),\; n \max_{e\in E} b_e \Bigr\}.
\end{equation}
\end{lemma}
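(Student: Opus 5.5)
We argue entirely through Rayleigh quotients of the quadratic form $x^\top B x = \sum_{(u,v)\in E} b_{uv}(x_u-x_v)^2$. We bound $\lambda_{\max}$ from below by exhibiting good test vectors, and $\lambda_2$ from above by a single test vector orthogonal to $\mathbf{1}$. Then we divide the two bounds.

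For (i), the first lower bound follows from the trace. $B$ is positive semidefinite with $\operatorname{tr}(B)=\sum_v \sum_{e\ni v} b_e = 2b(E)$, and $\ker B=\operatorname{span}\{\mathbf 1\}$ (connectivity), so at most $n-1$ eigenvalues are nonzero. Since they sum to $2b(E)$, the largest is at least $2b(E)/(n-1)$. For the second lower bound, fix an edge $e=(u,v)$ and take $x=e_u-e_v$, so that $\|x\|^2=2$. Then $x^\top B x = b_e\cdot 4 + \sum_{f\ni u,\,f\ne e} b_f + \sum_{f\ni v,\,f\neq e} b_f \ge 4b_e$. The Rayleigh quotient is therefore at least $2b_e$, and maximizing over $e$ gives $\lambda_{\max}(B)\ge 2\max_e b_e$.

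For (ii), use the Courant--Fischer characterization $\lambda_2(B)=\min_{x\perp\mathbf 1,\,x\ne 0} x^\top Bx/\|x\|^2$, which is valid because $\ker B=\operatorname{span}\{\mathbf 1\}$. Set $s=|S|$ and take the centered indicator $x=\mathbf 1_S-\gamma\mathbf 1$, which is orthogonal to $\mathbf 1$ by construction. Only edges in $\partial S$ contribute to the form, each with squared difference $1$, so $x^\top Bx=b(\partial S)$. The norm is $\|x\|^2=s(1-\gamma)^2+(n-s)\gamma^2=n\gamma(1-\gamma)$, which gives $\lambda_2(B)\le b(\partial S)/(\gamma(1-\gamma)n)$.

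For \eqref{eq:cutbound}, divide (i) by (ii). The trace term gives $\kappa_+\ge 2b(E)\gamma(1-\gamma)n/((n-1)b(\partial S))$, which is at least $2\gamma(1-\gamma)b(E)/b(\partial S)$ since $n/(n-1)\ge 1$. The edge term gives $\kappa_+\ge 2\gamma(1-\gamma)n\max_e b_e/b(\partial S)$. Taking the larger of the two yields the claim. Here $b(\partial S)>0$ because $G$ is connected and $S$ is a nonempty proper subset.

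The only delicate point is the edge test vector in (i). One must not drop the cross terms incorrectly: they are all nonnegative, so discarding them only weakens the bound, which is legitimate. The factor $4$ comes from the squared difference $(x_u-x_v)^2=(1-(-1))^2$. Everything else is bookkeeping. For the Lean formalization, the main overhead is likely stating Courant--Fischer for the second eigenvalue, and I would derive it from the spectral decomposition restricted to $\mathbf 1^\perp$.
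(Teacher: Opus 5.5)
Your proof is correct and follows essentially the same route as the paper. You use the trace spread over $n-1$ nonzero eigenvalues and the edge test vector $\mathbf{e}_u-\mathbf{e}_v$ for (i), the centered indicator $\mathbf{1}_S-\gamma\mathbf{1}$ with Rayleigh--Ritz on $\mathbf{1}^\perp$ for (ii), and $n/(n-1)\geq 1$ to combine; the only differences are cosmetic (an unnormalized test vector, and writing out the nonnegative cross terms and the positivity of $b(\partial S)$, which the paper leaves implicit).
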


\begin{proof}
(i) $\operatorname{tr}(B) = 2b(E)$ spreads over at most $n-1$ nonzero eigenvalues, and for $e = \{i,j\}$ the unit vector $y = (\mathbf{e}_i - \mathbf{e}_j)/\sqrt{2}$ has $y^\top B y \geq 2b_e$. (ii) $x = \mathbf{1}_S - \gamma\mathbf{1} \perp \mathbf{1}$ satisfies $x^\top B x = b(\partial S)$ and $x^\top x = \gamma(1-\gamma)n$, then apply Rayleigh--Ritz on $\mathbf{1}^\perp$. Combine with $n/(n-1)\geq 1$ for the result.
\end{proof}

The weighted form~\eqref{eq:cutbound} is strictly stronger than a homogenized bound from~$b_{\min}A^\top A \preceq B \preceq b_{\max}A^\top A$, as $b(E)/b(\partial S)$ compares total network stiffness to \emph{interface} stiffness. A single electrically weak tie line forces ill-conditioning even when the spread $b_{\max}/b_{\min}$ renders the homogenized bound vacuous.

Transmission networks supply such cuts on two grounds. Combinatorially, tree decompositions of width far below system size are routinely computed for large-scale grids, the structure behind chordal relaxations~\cite{madani2015, jabr2012}. Electrically, Lemma~\ref{lem:cut} needs only a balanced cut with $b(\partial S) \ll b(E)$, the ``weak connection'' premise underlying slow coherency. This is consistent with the inter-area oscillations observed in large interconnections~\cite{chow1982}, as $B$ is also the synchronizing torque Laplacian of the swing dynamics. Separations certify such cuts combinatorially. 
We say $G$ admits an \emph{$(s,\beta)$-separation}, $\beta \in (0,\tfrac12]$, if $V = V_A \sqcup X \sqcup V_B$ where $X$ is  the set of buses whose removal would disconnect the graph into two parts with no $V_A$--$V_B$ edges, $|X| \leq s$, and $\min\{|V_A \cup X|, |V_B|\} \geq \beta n$.

\begin{theorem}[Separators force polynomial conditioning]
\label{thm:separator}
If $G$ has maximum degree $\Delta$ and an $(s,\beta)$-separation, then
$$\kappa_+(B) \geq \sat{\frac{2\beta(1-\beta)}{s\Delta}\max\!\left\{\frac{b(E)}{b_{\max}},n\right\}}.$$
\end{theorem}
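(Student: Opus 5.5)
We apply Lemma~\ref{lem:cut} with the cut $S \triangleq V_A \cup X$. Write $\gamma = |S|/n$. By the definition of an $(s,\beta)$-separation we have $|S| \geq \beta n$ and $|V_B| = n - |S| \geq \beta n$, so $\gamma \in [\beta, 1-\beta]$. The function $\gamma \mapsto \gamma(1-\gamma)$ is concave and symmetric about $\tfrac12$, so on this interval it is minimized at the endpoints, which gives $\gamma(1-\gamma) \geq \beta(1-\beta)$. Both $S$ and $V_B$ are nonempty because $\beta > 0$, so $S$ is a nonempty proper subset of $V$ and the lemma applies.

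The main step is to bound the boundary weight $b(\partial S)$. There are no $V_A$--$V_B$ edges, so every edge leaving $S$ has one endpoint in $X$ and the other in $V_B$. Each bus of $X$ has degree at most $\Delta$, so $|\partial S| \leq |X|\,\Delta \leq s\Delta$. Each edge weighs at most $b_{\max}$, hence
\[
b(\partial S) \leq s\,\Delta\, b_{\max}.
\]
We may assume $b(\partial S) > 0$: $G$ is connected and $S$ is a nonempty proper subset, so $\partial S \neq \emptyset$ and all $b_e > 0$. This means the division in \eqref{eq:cutbound} is legitimate.

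Substituting both bounds into \eqref{eq:cutbound} gives
\[
\kappa_+(B) \geq \frac{2\beta(1-\beta)}{s\Delta b_{\max}}\max\{b(E),\, n\,b_{\max}\} = \frac{2\beta(1-\beta)}{s\Delta}\max\Bigl\{\frac{b(E)}{b_{\max}},\, n\Bigr\},
\]
which is the claim.

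I do not expect a real obstacle here. The only care needed is in the bookkeeping for the cut. One must take $S = V_A \cup X$ rather than $V_A$, because only then is the balance condition $\min\{|V_A\cup X|,|V_B|\}\geq\beta n$ exactly the hypothesis on $\gamma$. One must also count the boundary edges through $X$ alone, not through both sides, so that the factor is $s\Delta$ and not something larger.
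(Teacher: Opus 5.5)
Your proposal is correct and matches the paper's proof: take $S = V_A \cup X$, observe that every edge of $\partial S$ has an endpoint in $X$, so $b(\partial S) \leq s\Delta b_{\max}$, use $\gamma \in [\beta,1-\beta]$ to get $\gamma(1-\gamma) \geq \beta(1-\beta)$, and substitute into \eqref{eq:cutbound}. Your extra checks, that $S$ is a nonempty proper subset and that $b(\partial S) > 0$ by connectivity, are details the paper leaves implicit.
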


\begin{proof}
Take $S = V_A \cup X$. Every edge of $\partial S$ has an endpoint in $X$, so $b(\partial S) \leq s\Delta b_{\max}$, and $\gamma(1-\gamma) \geq \beta(1-\beta)$ since $\gamma \in [\beta, 1-\beta]$. Apply Lemma~\ref{lem:cut} for the desired result.
\end{proof}

\begin{corollary}
\label{cor:tw}
(i) If $G$ has treewidth $\tau \leq n/4 - 1$, then $G$ admits a $(\tau+1,\tfrac14)$-separation and $$\kappa_+(B) \geq \sat{\frac{3\max\{b(E)/b_{\max},n\}}{8(\tau+1)\Delta}} = \Omega(n/(\Delta\tau)).$$ (ii) If $G$ is planar and $n \geq 288$, then $$\kappa_+(B) \geq \sat{\frac{5\max\{b(E)/b_{\max},n\}}{18\sqrt{8n}\,\Delta}} = \Omega(\sqrt{n}/\Delta).$$
\end{corollary}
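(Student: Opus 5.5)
Both parts reduce to Theorem~\ref{thm:separator}. For each graph class we exhibit an $(s,\beta)$-separation and substitute into the bound. The constants come from $2\beta(1-\beta)/s$: with $\beta=\tfrac14$ this is $2\cdot\tfrac14\cdot\tfrac34/s = 3/(8s)$, and with $\beta=\tfrac13$ it is $2\cdot\tfrac13\cdot\tfrac23/s = 4/(9s)$.

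For (i), we use the standard balanced-bag lemma for tree decompositions. Let $(T,\{W_t\})$ be a decomposition of width $\tau$, so every bag has size at most $\tau+1$. Root $T$ and pick a node $t$ whose bag $X=W_t$ is such that every component of $G-X$ has at most $n/2$ vertices. Such a node exists: walk from the root toward the child subtree holding more than $n/2$ of the vertices not yet covered; the walk must stop.

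Next we group the components of $G-X$ into $V_A$ and $V_B$, with no edges between the two sides. Add components greedily to $V_B$ until $|V_B|\ge (n-|X|)/3$. Since each component has size at most $n/2$, a short case analysis gives $\min\{|V_A\cup X|,|V_B|\}\ge n/4$. One case is a single component exceeding $(n-|X|)/3$ being taken alone; this still keeps $|V_A\cup X|\ge n/2$. The hypothesis $\tau\le n/4-1$, i.e.\ $|X|\le n/4$, is exactly what lets $(n-|X|)$ be large enough for the quarter threshold. With $s=\tau+1$ and $\beta=\tfrac14$, Theorem~\ref{thm:separator} gives the factor $3/(8(\tau+1)\Delta)$.

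For (ii), we invoke the Lipton--Tarjan planar separator theorem in its sharpened form: $|X|\le\sqrt{8n}$ and each side has at most $2n/3$ vertices. Both sides are then at least $n/3-\sqrt{8n}$. To keep $\beta$ a constant, we use $n\ge 288$, so $\sqrt{8n}\le n/6$. Indeed, $\sqrt{8n}\le n/6$ is equivalent to $n\ge 288$, which is exactly where the threshold comes from. Then $|V_B|\ge n/3-n/6\cdot(\dots)$, and we choose $\beta$ so that $2\beta(1-\beta)=5/18$, i.e.\ $\beta=\tfrac16$: indeed $2\cdot\tfrac16\cdot\tfrac56=\tfrac{5}{18}$.

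So the target is to verify $\min\{|V_A\cup X|,|V_B|\}\ge n/6$. Putting $X$ with the larger side gives at least $n/2$ there. The other side has size at least $n-2n/3-\sqrt{8n}\ge n/3-n/6=n/6$, again using $n\ge288$. Substituting $s=\sqrt{8n}$ and $\beta=\tfrac16$ then yields the stated bound $\Omega(\sqrt n/\Delta)$.

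The main obstacle is the bookkeeping in (i): making the greedy grouping provably reach $\beta=\tfrac14$ from the $n/2$-component bound and $|X|\le n/4$. I would handle it by splitting into two cases, according to whether some component has at least $n/4$ vertices (take it alone as $V_B$) or not (greedy accumulation then overshoots by less than $n/4$).
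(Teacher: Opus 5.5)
Your proposal is correct and follows the paper's proof. In (i) you take a balanced bag and group components in two cases: a component of size at least $n/4$ taken alone, or otherwise greedy accumulation overshooting by less than $n/4$. In (ii) you use Lipton--Tarjan with $\sqrt{8n}\le n/6$ for $n\ge 288$ to get a $(\sqrt{8n},\tfrac16)$-separation, and both parts are substituted into Theorem~\ref{thm:separator}. Your threshold $(n-|X|)/3$ in place of $n/4$ works equally well with the matching case split, but you should remove the unused $\beta=\tfrac13$ computation and the garbled line ``$n/3-n/6\cdot(\dots)$''.
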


\begin{proof}
(i) A graph of treewidth $\tau$ has a bag $X$, $|X| \leq \tau+1$, whose removal leaves components of size $\leq n/2$ \cite[Ch.~7]{cygan2015}. Greedily group components into $V_B$ until $|V_B| \geq n/4$; then $|V_B| \leq n/2$, as either one component qualifies alone or the last addition was $< n/4$, so $|V_A \cup X| \geq n/2$, and $\sum_i |C_i| \geq \tfrac34 n$ guarantees termination. Apply Theorem~\ref{thm:separator} with $\beta = \tfrac14$, $m \geq n-1$. (ii) Lipton--Tarjan~\cite{liptontarjan1979} gives such a partition with $|V_A|,|V_B| \leq \tfrac{2n}{3}$ and $|X| \leq \sqrt{8n}$. Since $\sqrt{8n} \leq n/6$ for $n \geq 288$, both sides \sat{are at least} $n/6$, a $(\sqrt{8n},\tfrac16)$-separation. Apply Theorem~\ref{thm:separator} with $\beta = \tfrac16$ for the desired result.
\end{proof}

Real grids are often \emph{near} planar rather than planar, but part (ii) persists up to constants \sat{when $c=O(n)$ branch crossings are planarized. Reinvoking~\cite[Thm.~4]{liptontarjan1979}, using nodal costs $1/n,0$ on buses/crossings, projects size $2\sqrt{8(n+c)}=O(\sqrt n)$}.
Either way $\kappa_+$ grows polynomially, defeating polylog$(\kappa)$ hopes. We now show that \sat{quadratic growth can arise from} \emph{corridors}, long interties and transfer interfaces that join regional subnetworks, formalized below.

\begin{proposition}[Corridors imply quadratic conditioning]
\label{prop:corridor}
Suppose $V = V_S \sqcup P \sqcup V_T$ with no $V_S$--$V_T$ edges, where $P = \{p_1,\dots,p_\ell\}$, $\ell \geq 2$, induces a path whose interior vertices have degree two in $G$ and whose endpoints $p_1, p_\ell$ have neighbors only in $V_S \cup \{p_2\}$ and $V_T \cup \{p_{\ell-1}\}$, respectively, and let $E_P$ be the $\ell-1$ path edges. If \sat{$\beta>0$ and} $\min\{|V_S|,|V_T|\} \geq \beta n$, $$\kappa_+(B) \geq \sat{2\beta^2(\ell-1)^2\frac{\max\{b(E),n b_{\max}\}}{b(E_P)} \geq 2\beta^2n(\ell-1)}.$$ Since $b(E_P) \leq (\ell-1)b_{\max}$, \sat{$\ell=\Theta(n)$ and fixed $\beta>0$ give $\kappa_+(B)=\Omega(n^2)$ for all positive weights}. Letting $\ell = 2$ \sat{recovers the tie line scaling} of Theorem~\ref{thm:separator}.
\end{proposition}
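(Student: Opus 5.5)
The plan is to repeat the Rayleigh--Ritz argument of Lemma~\ref{lem:cut}(ii), but with a test vector that ramps \emph{gradually} along the corridor $P$ instead of jumping across a single cut. This makes the Dirichlet energy scale like $b(E_P)/(\ell-1)^2$ rather than $b(\partial S)$. The upper bound on $\lambda_2(B)$ is then combined with the lower bound on $\lambda_{\max}(B)$ from Lemma~\ref{lem:cut}(i).

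\emph{Test vector and its energy.} Define $f$ by $f\equiv 0$ on $V_S\cup\{p_1\}$, $f\equiv 1$ on $V_T\cup\{p_\ell\}$, and along the path
\[
f(p_{k+1})-f(p_k)=\frac{1/b_{e_k}}{R},\qquad R\triangleq\sum_{k=1}^{\ell-1}\frac{1}{b_{e_k}},
\]
where $e_k=\{p_k,p_{k+1}\}$. Thus the increments follow the resistances, as in a series circuit. The structural hypotheses give the following edge classification:
\begin{itemize}
\item there are no $V_S$--$V_T$ edges;
\item interior path vertices have degree two;
\item $p_1$ only touches $V_S\cup\{p_2\}$, and $p_\ell$ only touches $V_T\cup\{p_{\ell-1}\}$.
\end{itemize}
Hence every non-path edge joins two vertices with equal $f$-value. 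It follows that
\[
f^\top B f=\sum_k b_{e_k}\Bigl(\frac{1}{b_{e_k}R}\Bigr)^2=\frac{1}{R}.
\]
By Cauchy--Schwarz, $(\ell-1)^2\le R\,b(E_P)$, so $f^\top Bf\le b(E_P)/(\ell-1)^2$. This classification of edges is the step where the hypotheses on $P$ are genuinely needed, and it is where I would be most careful.

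\emph{Variance.} Let $x=f-c\mathbf 1$ with $c$ the mean of $f$. Then $x\perp\mathbf 1$ and $x^\top Bx=f^\top Bf$. Discarding the path vertices,
\[
x^\top x\ \ge\ |V_S|c^2+|V_T|(1-c)^2\ \ge\ \frac{|V_S||V_T|}{|V_S|+|V_T|}.
\]
This last quantity is half the harmonic mean of $|V_S|$ and $|V_T|$, both of which are at least $\beta n$, so it is at least $\beta n/2$. Rayleigh--Ritz on $\mathbf 1^\perp$ then yields
\[
\lambda_2(B)\le \frac{2\,b(E_P)}{\beta n(\ell-1)^2}.
\]

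\emph{Combining.} By Lemma~\ref{lem:cut}(i), $\lambda_{\max}(B)\ge 2\max\{b(E)/(n-1),\,b_{\max}\}\ge (2/n)\max\{b(E),\,n b_{\max}\}$. Therefore
\[
\kappa_+(B)\ \ge\ \beta(\ell-1)^2\,\frac{\max\{b(E),\,n b_{\max}\}}{b(E_P)}.
\]
The constant can be relaxed as stated: $|V_S|+|V_T|\le n$ forces $\beta\le\tfrac12$, so $\beta\ge 2\beta^2$, giving the first claimed inequality. For the second, use $b(E_P)\le(\ell-1)b_{\max}$ together with the $n b_{\max}$ branch of the maximum, which gives $\kappa_+(B)\ge 2\beta^2 n(\ell-1)$.

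The remarks after the statement follow directly. Taking $\ell=\Theta(n)$ with fixed $\beta$ gives $\Omega(n^2)$. Taking $\ell=2$ leaves the single tie edge as the cut, which reproduces the scaling of Theorem~\ref{thm:separator}.
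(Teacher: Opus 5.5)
Your proof is correct and follows the paper's route: a Rayleigh--Ritz test vector on $\mathbf 1^\perp$ that is constant on $V_S$ and on $V_T$ and ramps along the corridor, with the variance bounded by discarding the path buses and optimizing the shift, then combined with Lemma~\ref{lem:cut}(i). Your two departures are refinements of that argument. The paper uses the uniform ramp $x_{p_i}=-1+2(i-1)/(\ell-1)$ and the estimate $|V_S||V_T|\ge\beta^2n^2$, while your resistance-weighted ramp gives the smaller energy $1/R$ (relaxed to $b(E_P)/(\ell-1)^2$ by Cauchy--Schwarz) and your harmonic-mean estimate gives the constant $\beta$ in place of $2\beta^2$; so you actually establish the stronger bound $\kappa_+(B)\ge\beta R\max\{b(E),nb_{\max}\}$ with $R=\sum_{e\in E_P}1/b_e$ before weakening it to the stated one.
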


\begin{proof}
Let $x_v = -1$ on $V_S$, $+1$ on $V_T$, and $x_{p_i} = -1 + \tfrac{2(i-1)}{\ell-1}$. Only path edges see a nonzero difference, each $2/(\ell-1)$, so $x^\top B x = 4b(E_P)/(\ell-1)^2$. With $\tilde x = x - (\mathbf{1}^\top x/n)\mathbf{1}$, $\tilde x^\top B \tilde x = x^\top B x$, while $\tilde{x}^\top \tilde{x} = \min_{c} \|x - c\mathbf{1}\|^2 \geq \min_{c} [\, |V_S|(1+c)^2 + |V_T|(1-c)^2 ] = 4|V_S||V_T|/(|V_S|+|V_T|) \geq 4\beta^2 n$, using $|V_S||V_T| \geq \beta^2 n^2$ and $|V_S|+|V_T| \leq n$. Hence $\lambda_2(B) \leq b(E_P)/(\beta^2 n(\ell-1)^2)$, then \sat{apply both bounds in Lemma~\ref{lem:cut}(i)}.
\end{proof}

A corridor \sat{gives $\lambda_2(B)=O(b(E_P)/(n\ell^2))$} while the rest of the network ensures $\lambda_{\max} = \Omega(\bar b)$, \sat{providing an explanation for the near quadratic fit $\kappa \sim n^2$ on the PGLib systems~\cite{pareek2024}}. Finally, the bounds need no model information: randomize the branch data to \textit{any} bounded distribution~\cite{talkington2025} to yield the~following.

\begin{proposition}[Samplewise ill-conditioning, random susceptances]
\label{prop:random}
Fix a topology with maximum degree $\Delta$ and an $(s,\beta)$-separation, and let $\{b_e\}$ be independent with $b_e \in (0,\sat{K}]$ a.s.\ \sat{for deterministic support bound $K$} and mean $\bar\mu \triangleq m^{-1}\sum_e \mathbb{E}[b_e]$. Then for every $\varepsilon \in (0,1)$, with probability at least $1 - \exp(-2\varepsilon^2 m\bar\mu^2/\sat{K^2})$, 
$$
\kappa_+(B) \geq 2\beta(1-\beta)(1-\varepsilon)\,\frac{\bar{\mu} m}{Ks\Delta},
$$ and the substitution $b(E) \mapsto (1-\varepsilon)m\bar\mu$ applies, with the same probability, to \sat{the weighted terms in} Corollary~\ref{cor:tw} and Proposition~\ref{prop:corridor}. \sat{The topology only bound holds for every realization under arbitrary joint laws on positive $b_e$.}
\end{proposition}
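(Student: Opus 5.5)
The plan is to reduce everything to a deterministic inequality from Theorem~\ref{thm:separator}, written in terms of $b(E)$ and the support bound $K$. A single Hoeffding lower-tail estimate on $b(E)$ then finishes the argument. The deterministic step comes first. In the proof of Theorem~\ref{thm:separator}, each edge of $\partial S$ (with $S=V_A\cup X$) has an endpoint in $X$, so $b(\partial S)\le s\Delta\max_e b_e\le s\Delta K$ surely. Lemma~\ref{lem:cut} with $\gamma(1-\gamma)\ge\beta(1-\beta)$ then gives, for every realization,
\[
\kappa_+(B)\ \ge\ \frac{2\beta(1-\beta)}{s\Delta K}\,b(E).
\]
We deliberately replace $b_{\max}$ by $K$ rather than keeping the random quantity $b(E)/b_{\max}$, so that the only random term left is $b(E)$. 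In the same way, the branch $\max\{\cdot,n\}$ of Theorem~\ref{thm:separator} contains no $b_e$ at all: $b(\partial S)\le s\Delta b_{\max}$ and $\lambda_{\max}\ge 2b_{\max}$ cancel $b_{\max}$ pointwise. This cancellation proves the closing sentence of the statement, that the topology-only bound holds for every realization under any joint law on positive weights, with no independence needed.

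The probabilistic step follows. The $b_e$ are independent and take values in $[0,K]$, and $\mathbb{E}\,b(E)=m\bar\mu$. Hoeffding's inequality gives
\[
\Pr\bigl[b(E)\le m\bar\mu-t\bigr]\le \exp\!\bigl(-2t^2/(mK^2)\bigr).
\]
Setting $t=\varepsilon m\bar\mu$ produces exactly $\exp(-2\varepsilon^2 m\bar\mu^2/K^2)$. On the complementary event we have $b(E)\ge(1-\varepsilon)m\bar\mu$, and substituting this into the deterministic inequality gives the displayed bound.

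For the transfer claims, note that Corollary~\ref{cor:tw} and Proposition~\ref{prop:corridor} are monotone increasing in $b(E)$ in their weighted terms. Corollary~\ref{cor:tw} is Theorem~\ref{thm:separator} with specific $(s,\beta)$. In Proposition~\ref{prop:corridor} the weighted term is $b(E)/b(E_P)$, and we bound $b(E_P)\le(\ell-1)K$ deterministically. On the same Hoeffding event, $b(E)$ may therefore be replaced by $(1-\varepsilon)m\bar\mu$ after $b_{\max}$ or $b(E_P)$ has been replaced by its sure upper bound. This substitution is legitimate because a lower bound on a numerator and an upper bound on a denominator both preserve the inequality. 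The same event works for all three results, so no union bound is needed.

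The main obstacle is the step where the random denominator $b_{\max}$ (or $b(E_P)$) appears jointly with $b(E)$, since these are dependent. I would resolve it by bounding the denominator surely by $K$ (respectively $(\ell-1)K$), as planned above. With that done, Hoeffding only has to control the single sum $b(E)$.
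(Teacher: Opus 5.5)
Your proposal is correct and follows the paper's proof: the cut estimate of Theorem~\ref{thm:separator} holds samplewise with $b_{\max}\le K$, and a single Hoeffding lower-tail bound on $b(E)$, a sum of independent terms in intervals of length $K$, gives exactly the stated probability. One remark on the step you flagged as the main obstacle: the dependence between $b_{\max}$ (or $b(E_P)$) and $b(E)$ is not actually a problem. On the event $\{b(E)>(1-\varepsilon)m\bar\mu\}$ the substitution already holds pointwise with the realized $b_{\max}$ or $b(E_P)$ left in the denominator, so replacing them by $K$ or $(\ell-1)K$ is needed only to obtain a deterministic right-hand side like the displayed one.
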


\begin{proof}
The cut estimates hold samplewise. Only $b(E)$, a sum of $m$ independent variables in intervals of length $\leq \sat{K}$, is random, and Hoeffding gives $\Pr[b(E) \leq (1-\varepsilon)m\bar\mu] \leq \exp(-2\varepsilon^2 m\bar\mu^2/\sat{K^2})$. Apply Theorem~\ref{thm:separator} for the result.
\end{proof}

Boundedness makes each $b_e$ \textit{sub-Gaussian}, so Proposition~\ref{prop:random} covers genericity, uncertainty, or measurement error of $b_e$.

\section{No End-to-End Quantum Advantage for DCPF}
\label{sec:e2e}

Call a sequence of DCPF instances a \emph{grid family} if $\Delta, \tau, s = O(1)$ (hence $m = \Theta(n)$). By Corollary~\ref{cor:tw}, $\kappa = \Omega(n)$, and $\Omega(n^2)$ with macroscopic corridors.

\begin{proposition}[End-to-end lower bounds]
\label{prop:e2e}
On a grid family, for \emph{any} quantum DCPF algorithm built from a QLS solver \sat{with a $p$-independent number of input preparation unitary and inverse calls per solve for fixed $B$}, even granting free (QRAM) state preparation\sat{, with constant success probability}: (i) \sat{for some $p_*\perp\mathbf1$ and $0<\varepsilon<1$,} outputting a classical $\hat\theta$ with $\|\hat\theta - \theta\|_2 \leq \varepsilon\|\theta\|_2$ costs $\widetilde\Omega(n\kappa/\varepsilon) = \widetilde\Omega(n^2/\varepsilon)$ queries, versus $\widetilde O(n\log(1/\varepsilon))$ classically; (ii) estimating \sat{$q_i=(\theta_i/\|\theta\|_2)^2$ or $q_{ij}=((\theta_i-\theta_j)/\|\theta\|_2)^2$ across gap $\gamma>0$ to error $\gamma/4$, using at least one QLS solve,} costs $\Omega(\kappa) = \Omega(n)$ \sat{on one input}, already the price of the classical solve. Without QRAM, (F2) adds $\Omega(n)$ loading gates per repetition \sat{when $p_*$ is dense}. On corridor families the gap in (i) widens to $\widetilde\Omega(n^3/\varepsilon)$ vs. $\widetilde O(n\log(1/\varepsilon))$.
\end{proposition}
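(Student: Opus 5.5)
The plan is to combine the structural bound of Corollary~\ref{cor:tw} with the computational facts (F1)--(F3) in a black-box, multiplicative way. First I would pin down the scaling on a grid family. Since $\Delta,\tau,s=O(1)$ and $m=\Theta(n)$, Corollary~\ref{cor:tw}(i), or Theorem~\ref{thm:separator} with constant $\beta$, gives $\kappa_+(B_r)\geq\kappa_+(B)=\Omega(n)$. This transfers to the grounded matrix through the Poincar\'e separation remark. On corridor families with $\ell=\Theta(n)$, Proposition~\ref{prop:corridor} gives $\kappa=\Omega(n^2)$ instead.

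Second, I would establish the per-solve cost. By (F1) there is some input $p_*\perp\mathbf 1$ on which every QLS solver with controlled input preparation needs $\Omega(\kappa)$ matrix-oracle queries to prepare a state $\propto B_r^{-1}\ket{p_*}$. The hypothesis that each solve makes a $p$-independent number of preparation calls is what keeps this lower bound from being circumvented: free QRAM preparation makes the calls cheap but cannot change the $\kappa$-dependence.

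Part (i) then needs tomography. Any classical $\hat\theta$ with relative error $\varepsilon$ determines the normalized state $\ket{\theta}$ to accuracy $O(\varepsilon)$. By (F2), producing it therefore requires $\widetilde\Omega(n/\varepsilon)$ applications of the preparation unitary or its inverse, and each application is a fresh QLS solve costing $\Omega(\kappa)$. Multiplying gives $\widetilde\Omega(n\kappa/\varepsilon)=\widetilde\Omega(n^2/\varepsilon)$, or $\widetilde\Omega(n^3/\varepsilon)$ on corridor families. The classical comparison is (F3) with $m=\Theta(n)$, giving $\widetilde O(n\log(1/\varepsilon))$. The loading remark follows directly: without QRAM and with $p_*$ dense, (F2) charges $\Omega(n)$ gates per repetition.

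Part (ii) is simpler. Estimating $q_i$ or $q_{ij}$ across a gap $\gamma$ to error $\gamma/4$ needs at least one QLS solve by hypothesis. On the hard input $p_*$ that solve costs $\Omega(\kappa)=\Omega(n)$, which already matches the $\widetilde O(n)$ classical solve, so no advantage is possible. I would note that constant success probability costs only constant factors, by standard amplification and median arguments.

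The main obstacle is the multiplication step in (i). Lower bounds do not compose multiplicatively in general: one might amortize the work of many solves, or extract information about $\theta$ without passing through state copies. I would handle this by reading the preparation unitary $U_{p_*}$ as an oracle within the (F2) model. Any algorithm using $T$ copies or reflections built from $U_{p_*}$ needs $T=\widetilde\Omega(n/\varepsilon)$, and each such use is by definition one solve with per-solve cost $\Omega(\kappa)$ on the fixed hard instance $(B,p_*)$. The $p$-independence hypothesis is what licenses this composition. A further subtlety is that (F1)'s hard input and (F2)'s hard state must coincide. I would address this by choosing $p_*$ so that $\theta=B_r^{-1}p_*$ is spread out, for example aligned with the Fiedler-type test vectors of Lemma~\ref{lem:cut} or Proposition~\ref{prop:corridor}, which are dense, and then arguing that (F2)'s worst case is attained within such a family up to polylog factors.
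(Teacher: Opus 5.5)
Your scaling step, the classical comparison, and the loading remark are fine. The composition step in (i), however, has a genuine gap, and the fix you sketch would not close it. (F2) is a lower bound for learning an \emph{unknown} state from a large family. On one fixed instance $(B,p_*)$ there is nothing to learn, so (F2) cannot be applied ``on the fixed hard instance.''

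Your proposed family fails under either reading. If it is literally the span of the Fiedler-type test vectors of Lemma~\ref{lem:cut} or Proposition~\ref{prop:corridor}, it is $O(1)$-dimensional. Tomography on it then costs only $\widetilde O(1/\varepsilon)$, and the factor $n$ is lost: what drives $\widetilde\Omega(n/\varepsilon)$ is the dimension of the family, not the density of its vectors. If the family is just ``dense vectors,'' it is rich enough for (F2). But (F1) only promises $\Omega(\kappa)$ \emph{on some input}, so nothing forces that cost on those inputs, and solver cost can genuinely depend on $p$ (e.g.\ variable-time methods).

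There is also a mismatch of resources. You state (F1) in matrix-oracle queries, while the proposition's hypothesis fixes the number of \emph{input-preparation} calls. That hypothesis therefore cannot transport a matrix-query bound from one input to another.

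The missing idea is to let the $p$-independence hypothesis do exactly that transport, with the per-solve bound proved in input-preparation calls. This is what the paper does.
\begin{itemize}
\item Take unit eigenmodes $u$ (for $\lambda_{\max}$) and $v$ (for $\lambda_2$), and set $p^{(0,1)}=(\kappa u\pm v)/\sqrt{\kappa^2+1}$. The inputs are $2/\sqrt{\kappa^2+1}$ apart, but the normalized solutions are $(u\pm v)/\sqrt2$, which are $\sqrt2$ apart. A hybrid argument therefore forces $\Omega(\kappa)$ input-preparation calls.
\item Since that count does not depend on $p$ for fixed $B$, every solve on every input pays $\Omega(\kappa)$.
\item Reachability finishes (i). Every unit $\theta_*\perp\mathbf 1$ is the normalized solution for $p_*=B\theta_*/\|B\theta_*\|$, so all of (F2)'s hard family lives inside DCPF. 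The two bounds then multiply without the hard instances having to coincide.
\item The same pair gives (ii) intrinsically. $q_i$ differs by $\gamma=2|u_iv_i|$ between the two outputs, so a $\gamma/4$ estimate separates them. This is where the gap hypothesis enters; your argument never uses it.
\end{itemize}

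A minor slip: Poincar\'e separation gives $\lambda_{\min}(B_r)\le\lambda_2(B)$ but not $\kappa(B_r)\ge\kappa_+(B)$. Grounding the centre of a two-edge star gives $B_r=I$, while $\kappa_+(B)=3$. The transfer to $B_r$ instead goes through $\operatorname{tr}(B_r)=2b(E)-b(\partial\{r\})$ with $\Delta=O(1)$.
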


\begin{proof}
By (F2) \sat{on the $(n-1)$-dimensional balanced subspace}, a classical output needs $\widetilde\Theta(n/\varepsilon)$ \sat{unitary and inverse calls. For unit modes $u,v$, put $(p^{(0)},p^{(1)})=(\kappa u+v,\kappa u-v)/\sqrt{\kappa^2+1}$. Their input oracles and normalized solutions are $2/\sqrt{\kappa^2+1}$ and $\sqrt2$ apart, so the hybrid argument forces $\Omega(\kappa)$ oracle queries. Reachability via $p_*=B\theta_*/\|B\theta_*\|$ transfers this cost to tomography's hard input; $\gamma/4$ error on a $\gamma$-gap local observable proves (ii).}
\end{proof}

With $\kappa = \Theta(n)$, conjugate gradient costs $\widetilde O(n^{3/2}\log(1/\varepsilon))$ while HHL pays $\widetilde\Omega(n\kappa^2/\varepsilon^2) = \widetilde\Omega(n^{3}/\varepsilon^2)$. Substituting the bound $\kappa = \Omega(n)$ into the advantage criterion~$D\kappa^{3/2} \ll n/d$ of~\cite{pareek2024} forces~$D \ll n^{-1/2} < 1$. No readout level, not even a single scalar, passes on separator structured families. Outputs with provable quantum advantage, such as Forrelation-type observables~\cite{dalzell2023}, remain unlike any power system problem to the best of our knowledge.

\section{Beyond DC Power Flow}
\label{sec:beyond}

\emph{AC power flow.}
\sat{At the lossless unit voltage flat start, the active angle block equals the susceptance Laplacian}~\cite{talkington2025}, and conditioning often degrades in regions away from flat start. Newton outer iterations are also classically adaptive, forming each mismatch from a classical iterate, so a quantum inner solver pays the $\widetilde\Theta(n/\varepsilon)$ tomography cost of (F2) \emph{per iteration}. Nonetheless, very recent results indicate optimism for ACPF, obtaining an~$\Omega(n\kappa/\varepsilon)$ end-to-end baseline~\cite{pareek2026conditionsquantumadvantageac}.

\emph{DC optimal power flow.}
Interior point methods (IPMs) for DC-OPF reduce each iteration to a Newton system whose \sat{positive barrier angle Hessian} is a \emph{weighted} susceptance Laplacian $A^\top\operatorname{diag}(w_{k,e}b_e)A$, $w_{k,e} > 0$. Thus Theorem~\ref{thm:separator} \sat{applies at each strictly feasible iterate}. Quantum IPMs~\cite{augustino2023} must also extract the Newton direction classically at each of $\widetilde O(\sqrt{n})$ iterations, compounding the tomography cost. Classically, the Laplacian structure is an asset; IPMs with fast Laplacian solvers underlie nearly-linear time network flow algorithms~\cite{chen2022}, which may extend to DC-OPF itself.

\emph{AC-OPF, security constraints, and unit commitment.}
AC-OPF feasibility is strongly NP-hard~\cite{bienstock2019}, as is UC~\cite{bendotti2019}. 
A polynomial time quantum algorithm for either \sat{general problem} would imply $\mathrm{NP} \subseteq \mathrm{BQP}$, which is widely believed to not hold. A \emph{particular} quantum algorithm may still solve a \emph{particular} problem instance fast\sat{; however,} it is an open question whether such instances are useful for the power engineering community. Additionally, relative to an \sat{enumerative or unstructured search} oracle, quantum search speedups are at most quadratic~\cite{bbbv1997}. So their realistic contribution to these problems is likely a Grover-esque quadratic speedup of the branch-and-bound step, but current fault-tolerant overheads erase such speedups~\cite{hoefler2023}. Security constrained OPF inherits these conclusions by a containment argument.

Ironically, efficient \emph{classical} approaches to AC-OPF, such as~\cite{madani2015, jabr2012}, exploit the small treewidth structure that dooms quantum linear algebra. While tempting, hybrid schemes likely do not escape either. Variational solvers still estimate loss by repeated preparation and measurement, so \sat{repeated measurement costs} persist and $\kappa$-dependence returns through the optimization landscape~\cite{dalzell2023}. A decomposition scheme assigning UC's combinatorial step to a quantum device, such as an annealer, \sat{concedes} the continuous subproblems\textemdash which may have fast classical algorithms\textemdash for a speedup at most quadratic. Hybridization is worthwhile if the quantum module avoids both linear system solving and classical readout; it \textit{relocates} rather than removes computational barriers, and requires closer scrutiny for power engineering applications.

\section{Outlook: The Right Classical Baseline?}
\label{sec:outlook}

This letter showed that common electric grid structures can make quantum power flow (QPF) provably slow in areas where classical methods are already fast. Table~\ref{tab:conditions} documents this structure in realistic test cases. These numerical results and all Lean 4 proof formalizations are open source. The Lean 4 code also compiles to an executable C program that certifies whether a test case exhibits QPF obstructions.

How power engineering can best benefit from new computational paradigms remains an open question. \sat{Quantum-inspired and randomized} solvers exist today~\cite{kyng2016,kyng2026} and remain largely untapped in power systems, warranting further study.

\begin{table}[t]
\caption{Grid structure conditions; $\checkmark$ indicates the condition applies.}
\label{tab:conditions}
\centering
\renewcommand{\arraystretch}{1.1}
\setlength{\tabcolsep}{3.2pt}
\begin{tabular}{lcccc}
\hline
Case & $n$ & Near Plan. & Sep. $s$ & $\operatorname{tw}\leq U$ \\
\hline
Texas7k~\cite{birchfield2017} & 6,717 & $\checkmark$ & $\checkmark$~26 & $\checkmark$~46 \\
CATS~\cite{taylor2024cats} & 8,870 & $\checkmark$ & $\checkmark$~6 & $\checkmark$~22 \\
PEGASE13k~\cite{Josz2016ACPF} & 13,659 & $\checkmark$ & $\checkmark$~15 & $\checkmark$~34 \\
Midwest24k~\cite{birchfield2017} & 23,643 & $\checkmark$ & $\checkmark$~51 & $\checkmark$~82 \\
GOC30k~\cite{birchfield2017} & 30,000 & $\checkmark$ & $\checkmark$~33 & $\checkmark$~67 \\
EPIGRIDS78k~\cite{snodgrass2021tractable} & 78,478 & $\checkmark$ & $\checkmark$~57 & $\checkmark$~108 \\
\hline
\end{tabular}
\vspace{-1em}
\end{table}

\bibliographystyle{ieeetr}
{\footnotesize
\bibliography{Refs}
}

\end{document}